
\documentclass{article}
\usepackage{todonotes}
\usepackage{comment}




\usepackage{float}
\usepackage{graphicx}      
\usepackage{subcaption}
\usepackage{color}
\usepackage{siunitx}
\usepackage{amsmath,amssymb}
\usepackage{algorithm}
\usepackage[noend]{algpseudocode}
	
\usepackage{xcolor}

\title{\LARGE \bf
	Reference Governor for Input-Constrained MPC to Enforce State Constraints at Lower Computational Cost
}

\author{Miguel Castroviejo-Fernandez$^{1}$, Jordan Leung$^{1}$ and Ilya Kolmanovsky$^{1}$
\thanks{$^{1}$University of Michigan, Ann Arbor, MI 48109 USA 
        {\tt\small mcastrov, jmleung, ilya@umich.edu}. This research is supported by 
        Air Force Office of Scientific Research Grant number FA9550-20-1-0385.}%
}
\newtheorem{ass}{Assumption}

\newtheorem{lem}{Lemma}
\newtheorem{thm}{Theorem}
\newtheorem{rmk}{Remark}

\newenvironment{proof}{\textbf{Proof:}}{\hfill$\blacksquare$}
\renewcommand{\d}[1]{\ensuremath{\operatorname{d}\!{#1}}}

\begin{document}


\maketitle
\thispagestyle{empty}
\pagestyle{empty}
\begin{abstract}
In this paper, a control scheme is developed based on an input constrained Model Predictive Controller (MPC) and the idea of modifying the reference command to enforce constraints, usual of Reference Governors (RG). The proposed scheme, referred to as the RGMPC, requires optimization for MPC with input constraints for which fast algorithms exist, and can handle (possibly nonlinear) state and input constraints. Conditions are given that ensure recursive feasibility of the RGMPC scheme and finite-time convergence of the modified command to the the desired reference command. Simulation results for a spacecraft rendezvous maneuver with linear and nonlinear constraints demonstrate that the RGMPC scheme has lower average computational time as compared to state and input constrained MPC with similar performance.
\end{abstract}

\section{introduction}

Model Predictive Control (MPC) is informed by optimization of a state and input dependent cost function. At each time step, the input sequence that minimizes this cost subject to constraints on the inputs and/or the states \cite{rawlings2017model} is computed and the input is set to the first element of the sequence. While MPC has emerged as an effective control strategy for constrained systems and is used in many applications, one of its primary drawbacks is the high computational cost associated with solving the optimization problem at each time step.  This computational cost can be significantly lowered in the case of short horizon Linear Quadratic MPC (LQ MPC) with only input constraints by exploiting the underlying structure of the cost to speed up gradient computations as in the Fast MPC algorithm  of \cite{kogel2011fast} or by employing accelerated primal projected gradient methods \cite{nesterov1983method}. In addition, it is easier to enforce anytime feasibility properties \cite{hewing2019scenario} for input constrained MPC (e.g., by saturating the computed input in the case of boxed constraints),  analyze the impact of inexact implementation \cite{liao2020time,leung2021computable}, certify an inexact solution \cite{richter2011computational} and exploit the regularity properties as compared to the state constrained case.  For example, \cite{skibik2022analysis} performs the analysis of an inexact implementation of state and input constrained MPC. Finally, to handle nonlinear constraints the use of more computationally expensive nonlinear MPC is required.

To capitalize on advantages of short-horizon input constrained MPC (uMPC) with polytopic input constraints  yet be able to handle state constraints and (possibly nonlinear) input constraints, in this paper we consider the augmentation of uMPC with a reference governor (RG). RGs \cite{garone2017reference} are add-on schemes that ensure, at each time step, selection of the reference command so that subsequent trajectories remain feasible with respect to constraints. However, the direct application of existing RGs to uMPC-based closed-loop systems is difficult.  For instance, if RG is based on online prediction \cite{bemporad1998reference,tsourapas2009incremental}, a uMPC optimization problem will need to be solved at each time step over the reference governor prediction horizon; this will likely exceed the computational cost of a state and input constrained MPC (cMPC).

In this paper we propose a new scheme which enables a computationally efficient application of RGs to complement uMPC in controlling linear systems with (possibly nonlinear) state constraints and nonlinear input constraints. This scheme, that we refer to as RGMPC, only requires that a single uMPC optimization problem be solved per time step.

For the proposed RGMPC scheme we show, under suitable assumptions, the recursive feasibility as well as finite-time convergence of the modified reference command to the desired constant reference command, i.e. properties expected of conventional RGs. Simulation results for a spacecraft rendezvous (RdV) problem demonstrate low computational requirements and good closed-loop performance being achieved with the proposed approach.

The paper is organized as follows. In Section \ref{sec:theo} the class of systems being addressed is discussed and the two main ingredients: uMPC and the Incremental Reference Governor (IRG) of \cite{tsourapas2009incremental}, needed for subsequent developments are reviewed.
Section \ref{sec:nIRG} introduces the proposed RGMPC scheme and presents theoretical results.
Finally, numerical simulations of the proposed scheme applied to a spacecraft RdV maneuver are reported in Section \ref{sec:EXAMPLE}.

\textbf{Notations:} $\mathbb{S}^n_{++}$, $\mathbb{S}^n_{+}$ denote the set of  symmetric $n\times n$ positive definite and positive  semi-definite matrices respectively. $I_m$ denotes the $m\times m$ identity matrix.
Given $x\in \mathbb{R}^n$ and $W\in \mathbb{S}^n_+$, the W-norm of $x$ is $||x||_W = \sqrt{x^{\top}Wx}$.
Given $P\in\mathbb{S}^n_{++},\;y\in\mathbb{R}^n$, $\mathcal{B}_P(y,r) = \left\{x\in \mathbb{R}^n\;|\; ||y-x||_P \leq r\right\}$ and $\lambda_+(P)$ is the maximum eigenvalue of $P$. Given $a\in\mathbb R^n,\;b\in\mathbb R^m,\; (a,b) = [a^\top,b^\top]^\top$. The sequence made of the $\alpha_j\in\mathbb R^n,\; j=a,\dots,\;b$ elements is denoted by $\{\alpha_j\}_{j=a}^b$.The set $\mathbb{N}$
is the set of positive integers and $\mathbb{N}_0$ the set of non negative ones.

\section{Preliminaries}
\label{sec:theo}
\subsection{Class of systems}
We consider a class of systems represented by the following linear discrete-time models,
\begin{subequations}
\label{eq:IRG_dyn}
\begin{align}
  &x_{k+1} = A x_k + B u_k,\label{eq:IRG_dyn1}\\
    &y_k = C x_k  \label{eq:IRG_dyn2},
 \end{align}
\end{subequations}
    where $A\in\mathbb R^{n\times n}$, $B\in\mathbb R^{n\times m}$, $C\in\mathbb R^{p\times n}$ and $k\in\mathbb N_0$.
    The system is subject to hard constraints on both states and inputs:
    \begin{subequations}
    \label{eq:IRG_cstrSetTots}    
        \begin{align}
    z_k = &(x_k,\;u_k) \in \mathcal{Z},\quad \forall k\geq 0, \label{eq:IRG_cstrSet}\\
    \mathcal{Z} = &\left\{(x,\;u)\;|\;  x\in \mathcal X, \quad u \in \mathcal U \right\}\subseteq\mathbb{R}^{n+m},
    \end{align}
    \end{subequations}
    where $\mathcal X\subset \mathbb R^n,\;\mathcal U \subset \mathbb R^m$ are compact, convex sets with the origin in their interiors. Furthermore, we make the following assumption:
        \begin{ass}
     The pair $(A,B)$ is stabilizable.
     \label{ass:ABStab}
    \end{ass}

    \subsection{Characterization of the steady states and inputs}

We consider the reference command (set-point) tracking problem of bringing the output, state and input of the system to a specific set-point $r\in\mathbb R^p$ and to the associated steady states and inputs $x_{ss},\;u_{ss}$, respectively. Using the usual definition of a steady state and (\ref{eq:IRG_dyn}), the set-points must satisfy the following: 

\begin{equation}
    \begin{bmatrix}
    A-I_n & -B & 0_{n+m\times p} \\
    C & 0_{p\times m} & -I_p 
    \end{bmatrix}  \begin{bmatrix}
    x_{ss}\\u_{ss}\\ r
    \end{bmatrix}=M \begin{bmatrix}
    x_{ss}\\u_{ss}\\ r
    \end{bmatrix}= 0\label{eq:IRG_ss}.
\end{equation}

Assumption \ref{ass:ABStab} ensures that (\ref{eq:IRG_ss}) has a solution \cite{limon2008mpc}. In the following, we define $z_{ss}(r) = (x_{ss}(r),\; u_{ss}(r))$, where $(z_{ss}(r),\; r)$ are solutions to (\ref{eq:IRG_ss}). Given the existence of constraints, the following equation describes an inner approximation of the set of admissible reference commands.
 \begin{equation*}
    { \mathcal R} = \left\{r \in\mathbb R^p\;|\exists \;z\in\tilde{\mathcal Z},\; M\begin{bmatrix}z\\r\end{bmatrix} = 0    \right\}.
 \end{equation*}
where $\tilde {\mathcal Z}\subset \text{ Int } \mathcal Z$ is a compact and convex set. This, under Assumption \ref{ass:ABStab}, implies that ${\mathcal R}$ is compact and convex.

\subsection{Input constrained MPC}
\label{sec:MPCIntro}
As explained in the introduction, uMPC offers several advantages as compared to state and input constrained MPC (cMPC). In the following we consider short-horizon uMPC with a quadratic cost function,

\begin{align*}
    J(\overline\xi,\overline\mu,v) &=\sum_{i = 0}^{N_{\tt MPC}-1} ||\xi_i-x_{ss}(v)||_Q^2 + ||\mu_i- u_{ss}(v)||_R^2 \nonumber\\&\qquad +||\xi_{N_{\tt MPC}}-x_{ss}(v)||_P^2,
\end{align*}
where $\overline \xi = \{\xi_i\}_{i=0}^{N_{\tt MPC}},\;\overline\mu = \{\mu_i\}_{i=0}^{N_{\tt MPC}-1}$, $Q\in\mathbb R^{n\times n},\;R\in\mathbb R^{m\times m},\;P\in\mathbb R^{n\times n}$ and  $N_{\tt  MPC}\in\mathbb N$.  The MPC law is defined using the solution to the following Optimal Control Problem (OCP) $Pr(x,v,N_{\tt MPC})$:
    \begin{subequations}
        \label{eq:IRG_MPC1}
    \begin{alignat}{2}
    \min_{\overline{\xi},\overline{\mu}}&~ &&J(\overline{\xi},\overline{\mu},v) \\
    \text{s.t.}&~ &&~ \xi_0 = x\\
    & &&~\xi_{i+1} = A\xi_i + B\mu_i,\; i = 0,\dots,N_{\tt MPC}-1,\\
    & &&~\mu_i \in \mathcal{U},\quad i = 0,\dots,N_{\tt MPC}-1.
    \end{alignat}
    \end{subequations}
We assume that
    \begin{ass}
     $Q\in\mathbb{S}^n_{++},\;R\in\mathbb{S}^m_{++},\;P \in\mathbb{S}^n_{++}$ and $P=Q+A^\top PA-(A^\top PB)(R+B^\top PB)^{-1}(B^\top PA)$, i.e. $P$ is the solution to the Discrete Algebraic Riccati Equation (DARE). 
    \label{ass:PRicat}
    \end{ass}
    Finally, let 
\begin{equation}
\{u^*_j(x,v,N_{\tt MPC})\}_{j=0}^{N_{\tt MPC}-1} \label{eq:IRG_optSeq} 
\end{equation}
 denote the solution to $Pr(x,v,N_{\tt MPC})$. Then, at time instant $k$ the MPC computed input is given by $ u_{k} = u^*_0(x_{k},v_{k},N_{\tt MPC})$. Assumption 1 and $Q \in \mathbb{S}_{++}^n$ ensure the existence of a stabilizing solution to the DARE in Assumption \ref{ass:PRicat}, and since $0\in\text{Int}\;\mathcal U$ the MPC control law is locally stabilizing at strictly constraint admissible equilibria \cite{borrelli2017predictive}. Note that MPC described in this section does not handle state constraints which will be handled by the IRG. 
 
\subsection{Incremental Reference Governor (IRG)}

For the time being, suppose that a control law for system (\ref{eq:IRG_dyn}), 
\begin{equation} u = g(x,r), \label{eq:IRG_genCtrlLaw}\end{equation}
which depends on the state $x$ and reference command $r$, is available. We define $ u^{g}_j(x,r) = g( x^{g}_j(x,r),r),\; j\in\mathbb N_0$ and $x^g_j(x,r)  = A^j x +\sum_{i=0}^{j-1}A^{j-1-i}Bu_i^{g}$ for $j\geq1$ and $x_0^g(x,r) = x$. The corresponding state-input vector is $ z^{g}_j(x,r) = (x^g_j,u^{g}_j)$.

Now, considering (\ref{eq:IRG_dyn}) in closed-loop with controller (\ref{eq:IRG_genCtrlLaw}), the aim of the IRG is to adjust the reference command that the system follows in such a way as to ensure that constraints are enforced. The IRG accomplishes this by testing whether an increment of the current reference command leads to constraint admissible trajectories.

More specifically, at each time step, the reference increment is parameterized as  $ v^+ = v_{k-1} + \kappa v_{dir}$, where $\kappa\in[0,1]$ is a parameter that dictates the rate at which $v_k$ converges to $r$,
\begin{equation}
    v_{dir} = r - v_{0},\label{eq:IRG_ydir}
\end{equation}
$v_0\in\mathcal R$ is such that $\{z^{g}_j(x_0,v_0)\}_{j=0}^{\infty}$ does not violate constraints and $x_0$ is the initial state.
If the constraints hold for $\{z^{g}_j(x_k,v^+)\}_{j=0}^{\infty}$ then $v_{k} = v^+$, otherwise, $v_{k} = v_{k-1}$.

 For certain problems, e.g.  if the control law (\ref{eq:IRG_genCtrlLaw}) is an LQR and there are only polytopic constraints, it is possible to compute the Maximum Output Admissible Set (MOAS), $\mathcal O_{\infty}^g(v)$, associated with $\;\mathcal Z$, (\ref{eq:IRG_genCtrlLaw}) and $v\in \mathcal R$. The constraint evaluation step is then reduced to verifying
    \begin{equation*}
           x^{g}_0(x_k,v^+) \in \mathcal O^g_{\infty}(v^+).
    \end{equation*}
     However, if $\mathcal O^g_{\infty}(v^+)$ (or a good inner approximation of it) cannot be computed, an alternative approach \cite{bemporad1998reference} is to predict state and control trajectories and verify if  
    \begin{align*}
        &z^{g}_j(x_k,v^+)\in \mathcal{Z}, \;j=0,\dots,N_{\tt RG}-2,\\
        &x^{g}_{N_{\tt RG}-1}(x_k,v^+)\in \mathcal I^g(v^+) ,
    \end{align*}
   where $\mathcal I^g(v^+)\subset \mathcal O^g_\infty(v^+)$ is a forward invariant set that contains $x_{ss}(v^+)$ in its interior and  $N_{\tt RG}\in\mathbb{N}_0$ is a fixed horizon length. Note that, $\mathcal I^g(v^+)$, is potentially small as compared to $\mathcal O^g_\infty(v^+)$. Using the prediction allows to extend the feasible region as entering $\mathcal I^g(v^+)$ is only required after $N_{\tt RG}$ steps.
   
   If the control law (\ref{eq:IRG_genCtrlLaw}) is the uMPC from section \ref{sec:MPCIntro}, computing the MOAS is difficult as the closed-loop system is nonlinear. A prediction-based approach, nevertheless, can be used to implement the IRG. Note, however that at each time instant, to compute the predicted input sequence over $N_{\tt RG}$ steps, one must solve $N_{\tt RG}$ optimization problems of the form \eqref{eq:IRG_MPC1}. This has the potential to be computationally demanding, possibly negating the advantages of using efficient uMPC solvers to alleviate computational burden. In the next section, we introduce the RGMPC scheme that has lower computational requirements.

\section{proposed RGMPC scheme}
\label{sec:nIRG}
Based on the ingredients introduced in the last two sections we now introduce our RGMPC scheme which augments uMPC to handle (potentially non-polyhedral) state constraints and non-polyhedral input constraints, whilst having a low computational effort.

Consider an input sequence, $\{u^{ext}_j(x,v)\}_{j=0}^\infty$, where 
\begin{align}
 u&_j^{ext}(x,v) =\label{eq:IRG_MPCSEC1}\\
 &\begin{cases}
     u^*_j(x,v,N_{\tt MPC})       & \text{if } j < N_{\tt MPC}\\
    \Pi_{\mathcal U}\left[K( x^{ext}_j- x_{ss}(v)) +  u_{ss}(v)\right] &  \text{if } j  \geq N_{\tt MPC} \nonumber
  \end{cases}
\end{align} 
where $K=(B^\top P B + R)^{-1}(B^\top P A)$ is the LQR gain associated with matrices $Q$ and $R$, $P$ is the solution to the associated DARE, $\Pi_{\mathcal U}(\cdot)$ denotes the projection operator onto the set $\mathcal U$, $x^{ext}_j = A^j x + \sum_{i=0}^{j-1}A^{j-1-i}Bu^{ext}_i$ and $v\in\mathcal R$.  Sequence (\ref{eq:IRG_MPCSEC1}) is the optimal input sequence of (\ref{eq:IRG_MPC1})  padded with a saturated LQR law for $j\geq N_{\tt MPC}$. 

Suppose that the sequence (\ref{eq:IRG_MPCSEC1}) has been computed at a time instant $k$ for the reference command $v^+$. A sufficient condition to ensure that this sequence and its associated state trajectory satisfy the constraints is that
\begin{subequations}
\label{eq:IRG_refCond2}
\begin{align}
 &z^{ext}_j(x_k,v^+)\in \mathcal{Z}, \quad j \leq N_{\tt RG}-2 \\
 &x^{ext}_{N_{\tt RG}-1}(x_k,v^+)\in \mathcal{I}^{\tt LQR}(v^+).
 \end{align}
\end{subequations}
where $N_{\tt MPC}$ is typically much smaller than $N_{\tt RG}$,  $\mathcal{I}^{\tt LQR}(v^+)\subset\mathbb R^n$ is a constraint admissible forward invariant set for system (\ref{eq:IRG_dyn}) under the LQR law associated with $Q$ and $R$.
Algorithm \ref{algo:1} describes the proposed RGMPC scheme.

\begin{algorithm}
\caption{Input generation and closed-loop system evolution at time instant $k$. } \label{algo:1}
\begin{algorithmic}[1]
\Require{ $x_k$: the current state, $v_{k-1}$: the  reference used at time $k-1$, $k'$: the last time instant at which $v_{k'}\neq v_{k'-1}$ (default $k' = 0$), $\{u^{ext}_j(x_{k'},v_{k'})\}^{N_{\tt MPC -1}}_{j=0}$, 
and $v_{dir}$. }
\State select $\kappa_k\in[0,1]$
    \State compute $v^+ = v_{k-1} + \kappa_k v_{dir}$ 
    \State compute $\{ u^{ext}_j(x_k,v^+)\}^{N_{\tt RG}-1}_{j=0}$ and $\{ z^{ext}_j\}^{N_{\tt RG}-1}_{j=0}$.
    \If{$\{ z^{ext}_j\}^{N_{\tt RG}-1}_{j=0}$ violates (\ref{eq:IRG_refCond2})} 
      \State $v_k = v_{k-1}$, $\kappa_k = 0$
    \If{$k-k'<N_{\tt MPC}$}
        \State $u_k = u^{ext}_{k-k'}(x_{k'},v_{k'})$
    \Else
    \State $u_k = \Pi_{\mathcal U}\left[K(x_k-x_{ss}(v_k))+u_{ss}(v_k)\right]$
    \EndIf
    \Else
       \State$ v_k = v^+$ 
       \State $u_k = u^{ext}_{0}(x_k,v^+)$ 
       \State $k' = k$
    \EndIf
    \State apply $u_k$ to the system.
    \State \Return {$v_{k}$, $k'$, $\{u^{ext}_j(x_{k'},v_{k'})\}^{N_{\tt MPC -1}}_{j=0}$, $\kappa_k$}
\end{algorithmic}
\end{algorithm}
\begin{rmk}
Algorithm \ref{algo:1} checks constraints for sequence (\ref{eq:IRG_MPCSEC1}) corresponding to the incremented reference command $v^+$. If constraints are satisfied, the incremented reference is accepted, $v_k=v^+$.  If not, the reference is held constant and the corresponding element of the MPC sequence computed at the time instant $k'$ (the last instant the reference command was updated) is applied.  Note that, if RGMPC is not able to update $v_k$ for more than $N_{\tt MPC}-1$ steps, it switches to saturated LQR feedback.
\end{rmk}

\begin{rmk} The choice of the terminal set $\mathcal I^{\tt LQR}(v)$ is application specific. A common choice is the MOAS of the LQR controlled closed-loop system. In the case of polytopic constraints the MOAS is also polytopic and can be computed in closed form \cite{gilbert1991linear}. For non polytopic constraints, if a polytopic approximation is possible, the problem is reduced to the previous case. Another choice for $\mathcal I^{\tt LQR}(v)$ are constraint admissible sublevel sets of Lyapunov functions of the LQR controlled system. If $P$ is the solution to the Lyapunov equation: $(A+BK)^\top P(A+BK) - P = I$ then sets of the form $
\mathcal I^c(v) = \{x\;|\;||x-x_{ss}(v)||^2_ P \leq c)\}$ are forward invariant. We can then choose $c$ s.t. $\mathcal I^c(v)\subseteq \mathcal Z$. This is a specific case of sets used in the RGs introduced in \cite{gilbert2002nonlinear}.
\end{rmk}

The values of $\kappa_k$ in line 1 of Algorithm \ref{algo:1} must be carefully selected. For example, if some constraints are active in specific regions of the state space, entering that region may require a smaller reference increment. Conversely, to accelerate the response, we usually look for the largest $\kappa_k$ that is admissible. In reference governors, the choice of $\kappa_k$ is often resolved by solving an optimization problem: maximize $\kappa_k$ such that the corresponding reference increment leads to a closed-loop state and input sequence that is constraint admissible \cite{garone2017reference}. In Algorithm \ref{algo:2}, we propose a simple $\kappa_k$ selection logic to ensure that a reference increment is feasible in finite-time without the need for the RG optimization problem to be solved. 
\begin{algorithm}
\caption{Selection of reference increment, $\kappa_k$, for Algorithm \ref{algo:1}} \label{algo:2}
\begin{algorithmic}[1]
\Require{ $\kappa^0\in(0,1]$: a default value of the increment. $N_{\tt a}\in\mathbb N$ a tuning parameter, $k$: the current time step, $k'$: the last time step s.t. $v_{k'}\neq v_{k'-1}$, $v_{k-1}$, $v_0$, $r$ and $\{\kappa_j\}_{j=0}^{k}$}
    \If{$k-k'\leq N_{\tt a}$}
    \State $\kappa_k  = \kappa^0$
    \Else
    \State $\kappa_k = \frac{\kappa^0}{k-k'-N_{\tt a}}$
    \EndIf
    \If{$\sum_{j=0}^{k}\kappa_j>1$}
    \State $\kappa_k = \kappa_r$, where $\kappa_r = 1- \sum_{j=0}^{k-1}\kappa_j$, so that $r = v_{k-1}+ \kappa_r( r-v_0)$
    \EndIf
    \State \Return{$\kappa_k$: to be used in Algorithm \ref{algo:1} at time step $k$}
    \end{algorithmic}
\end{algorithm}

For $r\in\mathcal R$ we define the set 
\begin{align*}
    \mathcal{P}(r) = &\left\{x\in\mathbb{R}^{n}\;| \;  z^{ext}_j(x,r)\in \mathcal{Z} \;\forall j \geq 0 \right\}\cap\nonumber\\
    &\left\{x\in\mathbb{R}^{n}\;| \;  x^{ext}_{N_{\tt RG}-1}(x,r)\in \mathcal{I}^{\tt LQR}(r) \right\},
\end{align*}
as the set of states for which the sequence generated by control (\ref{eq:IRG_MPCSEC1}) satisfies (\ref{eq:IRG_refCond2}). We assume that:\\
\begin{ass}
    \label{ass:ballEps} $\exists \epsilon>0$ s.t. $\forall v\in {\mathcal R}$, $\mathcal{B}( x_{ss}(v),\epsilon)\subseteq\mathcal{P}(v)$. 
\end{ass}
We also introduce the set 
\begin{equation*}
    \Gamma = \{(x,v)\in \mathbb R^n\times\mathcal R\;|\; x\in\mathcal P(v)\},
\end{equation*}
of state and reference couples for which (\ref{eq:IRG_refCond2}) is verified.
We now study some theoretical properties of RGMPC as defined by Algorithm \ref{algo:1} and Algorithm \ref{algo:2}. To facilitate this analysis, we first establish some preliminary results.
\begin{lem}
\label{lem:EnterEpsiInFiniteTime_ASsys}
Given an asymptotically stable (A.S.) linear system $x_{k+1} = A_c x_{k}$, $x\in\mathbb{R}^n$, and a compact set $\mathcal S\subset \mathbb R^n $ with the origin in its interior, it follows that $\forall \delta>0,\;\exists N\in\mathbb N$ s.t. $\forall j\geq N$, $A_c^{j}x_0\in\mathcal B(0,\delta),\; \forall x_0\in\mathcal S$.
\end{lem}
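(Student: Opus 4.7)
The plan is to exploit the standard fact that asymptotic stability of a linear system is equivalent to the spectral radius $\rho(A_c)$ being strictly less than one, and then promote this pointwise convergence to uniform convergence on the compact set $\mathcal{S}$.

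First, I would invoke Gelfand's formula (or equivalently, the Jordan form argument): since $A_c$ is A.S., there exist constants $c > 0$ and $\rho \in (0,1)$ such that $\|A_c^j\| \leq c\,\rho^j$ for all $j \in \mathbb{N}_0$, where $\|\cdot\|$ denotes any submultiplicative matrix norm compatible with the Euclidean norm on $\mathbb{R}^n$. Next, since $\mathcal{S}$ is compact, the continuous function $x \mapsto \|x\|$ attains its maximum on $\mathcal{S}$, so there exists $M > 0$ such that $\|x_0\| \leq M$ for every $x_0 \in \mathcal{S}$.

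Combining these two facts, for any $x_0 \in \mathcal{S}$ and any $j \in \mathbb{N}_0$,
\begin{equation*}
\|A_c^j x_0\| \leq \|A_c^j\|\,\|x_0\| \leq c M \rho^j.
\end{equation*}
Given $\delta > 0$, I would then choose $N$ to be any integer satisfying $N \geq \log(\delta/(cM))/\log(\rho)$ (with $N=1$ if $cM \leq \delta$). For all $j \geq N$ this yields $\|A_c^j x_0\| \leq \delta$, and the bound is independent of the particular $x_0 \in \mathcal{S}$, which is precisely the uniform statement required to conclude that $A_c^j x_0 \in \mathcal{B}(0,\delta)$ for every $x_0 \in \mathcal{S}$ and every $j \geq N$.

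There is no real obstacle here; the only subtlety is being careful that the uniformity in $x_0$ is genuinely obtained from the compactness of $\mathcal{S}$ (i.e., that the same $N$ works for all initial conditions). The hypothesis that the origin lies in the interior of $\mathcal{S}$ is not actually needed for the statement as written — only compactness of $\mathcal{S}$ is used — but it is presumably listed because $\mathcal{S}$ will play the role of a neighborhood of the equilibrium in the subsequent applications of the lemma.
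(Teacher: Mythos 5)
Your proof is correct, but it takes a different route from the paper's. You establish the uniform geometric bound at the level of the matrix powers, $\|A_c^j\|\leq c\,\rho^j$ with $\rho\in(0,1)$ (via Gelfand's formula or the Jordan form), and then combine it with the bound $\|x_0\|\leq M$ obtained from compactness of $\mathcal S$; the explicit choice $N\geq \log(\delta/(cM))/\log\rho$ then works uniformly over $\mathcal S$. The paper instead solves the discrete Lyapunov equation $A_c^\top P A_c - P = -I$, works with the quadratic Lyapunov function $V(x)=\tfrac12 x^\top P x$ and its geometric decay $V(A_c^i x)\leq q^i V(x)$, takes $c_1=\max_{x\in\mathcal S}V(x)$, and picks $N$ so that the sublevel set $\{V\leq q^N c_1\}$ sits inside $\mathcal B(0,\delta)$. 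The two arguments are logically equivalent, but each buys something slightly different: yours is more elementary and gives a directly computable $N$ once $c$ and $\rho$ are known, whereas the Lyapunov route produces \emph{invariant} sublevel sets, which is more in the spirit of the rest of the paper (cf.\ the forward-invariant sets $\mathcal I^c(v)$ in Remark~2 and the way Lemma~\ref{lem:EnterEpsiInFiniteTime} later re-uses the constants $c_1$, $c_2$, $q$ from this proof). Your side remark is also accurate: only compactness of $\mathcal S$ is used, and the hypothesis that the origin lies in the interior of $\mathcal S$ plays no role in the statement as written.
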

\begin{proof}
Given the system is A.S., following classical Lyapunov stability results for discrete linear systems, $A_c^\top PA_c - P = - I$ has a unique solution $P\in\mathbb S^n_{++}$. 
Define the Lyapunov function $V(x) = \frac{1}{2}x^\top P x$. By property of A.S. for linear systems, $\exists q\in(0,1)$ s.t. $V(A_c^i x)\leq q^i V(x),\;\forall i>1$. Now, let $c_1 = \max\{ V(x)\;|\;x\in\mathcal S\}$, which exists given continuity of $V$ and compactness of $\mathcal S$, and let $c_2>0 $ s.t. $\{x\;|\;V(x) \leq c_2\}\subseteq\mathcal B(0,\delta)$. Then, choosing $N = \min\{ j\;|\;c_2\geq q^j c_1\}$ completes the proof. 
\end{proof}
\begin{lem}
\label{lem:EnterEpsiInFiniteTime_1Ref}
Given Assumptions \ref{ass:ABStab} and \ref{ass:PRicat}, and $v\in\mathcal R$ it follows that $\forall \delta>0,\;\exists  N_{\delta,v} \;\in\mathbb N$ s.t. $\forall j\geq N_{\delta,v},\;x^{ext}_j(x,v)\in\mathcal B(x_{ss}(v),\delta),\;\forall \;x\in\mathcal P(v)$.
\end{lem}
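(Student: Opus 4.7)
The plan is to exploit the two-phase structure of the extended sequence \eqref{eq:IRG_MPCSEC1}: an MPC tail of length $N_{\tt MPC}$ followed by saturated LQR. Since $x\in\mathcal P(v)$, the defining constraint $x^{ext}_{N_{\tt RG}-1}(x,v)\in\mathcal I^{\tt LQR}(v)$ provides a uniform finite-time entry into the LQR invariant set, and once there, I would argue that the closed-loop is a bona fide asymptotically stable linear system, reducing the lemma to a direct application of Lemma \ref{lem:EnterEpsiInFiniteTime_ASsys}.

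More concretely, I would first observe that for $j\geq N_{\tt RG}-1$ the input is given by the saturated LQR law in \eqref{eq:IRG_MPCSEC1}. I would then argue that on $\mathcal I^{\tt LQR}(v)$ the saturation is inactive, i.e.\ $K(x-x_{ss}(v))+u_{ss}(v)\in\mathcal U$ whenever $x\in\mathcal I^{\tt LQR}(v)$. This is the key structural property assumed of $\mathcal I^{\tt LQR}(v)$ in the paper: being an LQR-constraint-admissible forward invariant set containing $x_{ss}(v)$ in its interior (either a MOAS or a Lyapunov sublevel set of $(A+BK)^\top P(A+BK)-P=-I$ contained in $\mathcal Z$). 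Consequently, along the extended trajectory starting from $x^{ext}_{N_{\tt RG}-1}(x,v)$, the shifted state $\tilde x_j = x^{ext}_j(x,v)-x_{ss}(v)$ evolves according to
\begin{equation*}
\tilde x_{j+1} = (A+BK)\,\tilde x_j,\qquad j\geq N_{\tt RG}-1,
\end{equation*}
and forward invariance of $\mathcal I^{\tt LQR}(v)$ guarantees $\tilde x_j\in\mathcal I^{\tt LQR}(v)-x_{ss}(v)$ for all such $j$. By Assumption \ref{ass:PRicat} the matrix $A+BK$ obtained from the DARE-based LQR gain is Schur stable, so the shifted system is A.S.

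With this in hand, I would set $\mathcal S = \mathcal I^{\tt LQR}(v)-x_{ss}(v)$, which is compact and contains the origin in its interior by the assumed properties of $\mathcal I^{\tt LQR}(v)$. Lemma \ref{lem:EnterEpsiInFiniteTime_ASsys} applied to $A_c = A+BK$, the set $\mathcal S$, and the given $\delta>0$ yields an integer $N'$ such that $(A+BK)^i \tilde x_0 \in\mathcal B(0,\delta)$ for every $\tilde x_0\in\mathcal S$ and every $i\geq N'$. Crucially, this $N'$ is uniform over all starting points $\tilde x_0$ in $\mathcal S$, hence uniform over $x\in\mathcal P(v)$ since $x^{ext}_{N_{\tt RG}-1}(x,v)-x_{ss}(v)\in\mathcal S$ for every such $x$. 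Taking $N_{\delta,v} = N_{\tt RG}-1+N'$ then gives the claim.

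The main obstacle, as I see it, is the non-saturation argument in the middle paragraph: one must be careful that the definition of $\mathcal I^{\tt LQR}(v)$ really does imply both forward invariance and the absence of input saturation inside it, so that the dynamics are genuinely linear. For MOAS-type choices this is part of the construction, and for the Lyapunov sublevel set choice it follows by picking the level small enough so that the implied LQR input lies in $\mathcal U$; either way the proof works. Everything else (compactness of $\mathcal I^{\tt LQR}(v)$, interior containment of $x_{ss}(v)$, Schur stability of $A+BK$) is either stated or an immediate consequence of Assumptions \ref{ass:ABStab} and \ref{ass:PRicat}.
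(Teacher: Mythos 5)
Your proof is correct and follows essentially the same route as the paper's: use $x\in\mathcal P(v)$ to guarantee entry into the LQR invariant set at step $N_{\tt RG}-1$, note that the dynamics from then on are those of the unsaturated, asymptotically stable LQR closed loop, and apply Lemma \ref{lem:EnterEpsiInFiniteTime_ASsys} on a compact set to obtain a uniform $N_{\delta,v}$. The only (harmless) differences are that you feed $\mathcal I^{\tt LQR}(v)-x_{ss}(v)$ into Lemma \ref{lem:EnterEpsiInFiniteTime_ASsys} where the paper uses the larger set $\mathcal O_\infty^{\tt LQR}(v)-x_{ss}(v)$, and that you make explicit the fact---left implicit in the paper---that constraint admissibility of $\mathcal I^{\tt LQR}(v)$ renders the projection $\Pi_{\mathcal U}$ inactive, so the closed loop is genuinely linear there.
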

\begin{proof} We define $x_j^{ext} = x_j^{ext}(x,v)$. Given that (\ref{eq:IRG_refCond2}) holds for $\{x^{ext}_j\}^{N_{\tt RG-1}}_{j=0}$, then $x^{ext}_{N_{\tt RG}-1}\in\mathcal I^{LQR}(v)\subseteq\mathcal O_{\infty}^{LQR}(v)$. Hence, for all $j\geq N_{\tt RG}-1$, $u^{ext}_{j}$ is derived from an LQR with gain matrix $K$. This, combined with Assumption \ref{ass:ABStab}, makes  $\{x^{ext}_j\}_{j=N_{\tt RG}-1}^\infty$ equivalent to a trajectory of (\ref{eq:IRG_dyn1})controlled using LQR with  A.S. equilibrium $x_{ss}(v)$. Making the change of variable $\tilde x = x - x_{ss}(v)$, the dynamics of the associated LQR controlled system are given by $\tilde x_{k+1} = (A-BK)\tilde x_k$. Also, note that $\mathcal O_{\infty}^{LQR}(v)$ is compact given compactness of $\mathcal X, \mathcal U$, by \cite[Theorem 2.1 (i)]{gilbert1991linear}. Therefore, for all $v\in\mathcal R$, Lemma \ref{lem:EnterEpsiInFiniteTime_ASsys} states that for the system $\tilde x_{k+1} = (A-BK) \tilde x_k$ and associated $\mathcal O_{\infty}^{LQR}(v)$, $\forall\delta>0$ $\exists N$ s.t. $\forall j\geq N,\; (A-BK)^j x\in\mathcal B(0,\delta),\;\forall x\in\mathcal O^{LQR}_{\infty}(v)$.
Introducing $N_{\delta,v} = N_{\tt RG}+N$ directly implies $\forall j\geq N_{\delta,v},\;x^{ext}_{j}(x,v)\in\mathcal B(x_{ss},\delta), \;\forall x\in\mathcal P(v)$.
\end{proof}
\begin{lem}
\label{lem:EnterEpsiInFiniteTime}
Given Assumption \ref{ass:ABStab} and \ref{ass:PRicat}, it follows that  $\forall\delta>0,\;\exists  N_\delta \;\in\mathbb N$ s.t. $\forall j\geq N_\delta$, $\;x^{ext}_j(x,v)\in\mathcal B(x_{ss}(v),\delta),\;\forall (x,v)\in\Gamma$.
\end{lem}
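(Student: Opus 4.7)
The plan is to upgrade the $v$-dependent bound from Lemma \ref{lem:EnterEpsiInFiniteTime_1Ref} to one that is uniform over $v \in \mathcal R$ by exploiting that both $\mathcal R$ and $\mathcal X$ are compact. Since Lemma \ref{lem:EnterEpsiInFiniteTime_1Ref} already delivers a finite $N_{\delta,v}$ for each fixed $v$, the actual content of the new statement is that $\sup_{v \in \mathcal R} N_{\delta,v} < \infty$.

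First I would revisit the construction underlying the proof of Lemma \ref{lem:EnterEpsiInFiniteTime_1Ref}. For any $(x,v) \in \Gamma$, the extended input sequence coincides with the (unsaturated) LQR feedback once $j \geq N_{\tt RG}-1$, because $x^{ext}_{N_{\tt RG}-1}(x,v) \in \mathcal I^{\tt LQR}(v) \subseteq \mathcal O^{\tt LQR}_\infty(v)$. In the shifted coordinate $\tilde x = x - x_{ss}(v)$, the closed loop becomes $\tilde x_{k+1} = (A-BK)\tilde x_k$, which is A.S. under Assumptions \ref{ass:ABStab} and \ref{ass:PRicat}. Crucially, the Lyapunov matrix $P \in \mathbb S^n_{++}$ solving $(A-BK)^\top P(A-BK) - P = -I$, the Lyapunov function $V(\tilde x) = \tfrac{1}{2}\tilde x^\top P \tilde x$, and the decay rate $q \in (0,1)$ satisfying $V((A-BK)^i \tilde x) \leq q^i V(\tilde x)$ are all independent of $v$.

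Next I would extract a uniform bound on the Lyapunov value at the moment the trajectory enters the LQR phase. Since $x^{ext}_{N_{\tt RG}-1}(x,v) \in \mathcal I^{\tt LQR}(v) \subseteq \mathcal X$ and $v \mapsto x_{ss}(v)$ is continuous on the compact set $\mathcal R$ (indeed linear in $r$ via (\ref{eq:IRG_ss})), the set
\[
    \mathcal S = \{\, x - x_{ss}(v) \;:\; x \in \mathcal X,\; v \in \mathcal R\,\}
\]
is compact, so $c_1 := \sup_{\tilde x \in \mathcal S} V(\tilde x) < \infty$ and is independent of $v$. Choosing $c_2>0$ with $\{\tilde x : V(\tilde x)\leq c_2\} \subseteq \mathcal B(0,\delta)$ and $N = \min\{j \in \mathbb N : q^j c_1 \leq c_2\}$, and setting $N_\delta = N_{\tt RG} + N$, yields $x^{ext}_j(x,v) \in \mathcal B(x_{ss}(v),\delta)$ for every $j \geq N_\delta$ and every $(x,v) \in \Gamma$.

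The main obstacle is making the uniform-in-$v$ boundedness argument rigorous: one has to confirm continuity of $v \mapsto x_{ss}(v)$ on $\mathcal R$, which comes for free from the linear characterization (\ref{eq:IRG_ss}) combined with Assumption \ref{ass:ABStab}, and the containment $\mathcal I^{\tt LQR}(v) \subseteq \mathcal X$, which is part of the defining property of $\mathcal I^{\tt LQR}(v)$ as a constraint admissible invariant set. Once these two ingredients are in hand, the remainder is a direct rerun of the argument of Lemma \ref{lem:EnterEpsiInFiniteTime_1Ref} with the pointwise constant $c_1(v)$ replaced by its uniform upper bound $c_1$.
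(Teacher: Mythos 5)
Your proposal is correct and follows essentially the same route as the paper: both arguments observe that the Lyapunov matrix $P$ and decay rate $q$ from Lemma \ref{lem:EnterEpsiInFiniteTime_ASsys} are independent of $v$, so only the initial Lyapunov level $c_1$ varies with $v$, and both replace it by a uniform bound obtained from compactness (the paper via the $P$-weighted diameter of $\mathcal X$, you via the compact difference set $\mathcal S=\{x-x_{ss}(v)\}$ using continuity of $v\mapsto x_{ss}(v)$ on the compact $\mathcal R$). The two bounds are interchangeable, so no substantive difference remains.
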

\begin{proof}
From Lemma \ref{lem:EnterEpsiInFiniteTime_1Ref} $\forall v \in\mathcal R\;
    \exists N_{\delta,v} \in\mathbb N$ s.t. $\forall j\geq N_{\delta,v},\;x^{ext}_{j}(x,v)\in\mathcal B(x_{ss}(v),\delta),\;\forall x\in \mathcal P(v)$.
As stated in the proof of Lemma \ref{lem:EnterEpsiInFiniteTime_1Ref}, the rate of decay of the A.S. system associated with $\{x^{ext}_{j}\}_{j=0}^\infty$ does not depend on the reference. Instead, $N_{\delta,v}$ depends on $v$ through the size of $\mathcal O^{LQR}_\infty(v)$. In other terms, the set $\mathcal S$ and associated $c_1$ in Lemma \ref{lem:EnterEpsiInFiniteTime_ASsys} change with $v$. However, $\forall v\in\mathcal R,\;\mathcal O^{LQR}_\infty(v)\subseteq\mathcal X$. Thus, given compactness of $\mathcal X$ and continuity of the weighted distance between two points: $\mathbb R^n\times\mathbb R^n\to \mathbb R_+,\;d_P(a,b) = ||a-b||_P$, then $d_P(\cdot,\cdot)$ reaches a maximum over $\mathcal X\times\mathcal X$. We can bound the $c_1$ constants by $c_M = \frac{1}{2}\max\{d_P(a,b)\;|\; a,b,\in\mathcal X\times\mathcal X\}$. In a similar way to the proof of Lemma \ref{lem:EnterEpsiInFiniteTime_ASsys}, we define $N_\delta=  \min\{ j\;|\;c_2\geq q^{j}c_M\}$. Therefore $N_{\delta,v}\leq N_\delta,\;\forall v\in\mathcal R$.
\end{proof}

Before presenting the next Lemma we introduce  $\mathcal R_{\bar\kappa} = \{v\in\mathcal R\;|\; \{v\}+\mathcal B(0,\bar\kappa)\subseteq\mathcal R\}$ where $\bar\kappa>0$.
\begin{lem}
\label{lem:FTConvR}
 If Assumptions \ref{ass:ABStab}-\ref{ass:ballEps} hold, there exists $ N_{\epsilon/2}\in\mathbb N$ and $\;\bar\kappa>0$, s.t. given $v\in\mathcal R_{\bar \kappa}$ and $x\in\mathcal P(v)$ then $\forall\; \kappa \in[0,\bar\kappa],\;\forall j\geq N_{\epsilon/2}$, $x^{ext}_{j}(x,v)\in \mathcal P(v^+)$, where $v^+ = v + \kappa v_{dir}$, $v_{dir} = r-v_0$ and $\{v_0,\;r\}\subset\mathcal R_{\bar\kappa}$. Moreover, the constant $\bar \kappa$ is independent from $x$.
\end{lem}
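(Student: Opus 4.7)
The plan is to combine Lemma~\ref{lem:EnterEpsiInFiniteTime} with the Lipschitz (in fact linear) dependence of the steady state $x_{ss}(\cdot)$ on the reference so that, after sufficiently many steps, the trajectory generated under reference $v$ lies in a ball around $x_{ss}(v^+)$ that Assumption~\ref{ass:ballEps} certifies as a subset of $\mathcal P(v^+)$. In rough form: Lemma~\ref{lem:EnterEpsiInFiniteTime} shrinks the trajectory into a ball of radius $\epsilon/2$ around $x_{ss}(v)$, while small $\bar\kappa$ keeps $x_{ss}(v^+)$ within $\epsilon/2$ of $x_{ss}(v)$; the triangle inequality and Assumption~\ref{ass:ballEps} then finish the job.

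First I would invoke Lemma~\ref{lem:EnterEpsiInFiniteTime} with $\delta = \epsilon/2$. This yields an integer $N_{\epsilon/2}$ such that for every $(x,v)\in\Gamma$ and every $j\geq N_{\epsilon/2}$ one has $x^{ext}_j(x,v)\in \mathcal B(x_{ss}(v),\epsilon/2)$; the hypotheses $v\in\mathcal R_{\bar\kappa}\subseteq\mathcal R$ and $x\in\mathcal P(v)$ ensure $(x,v)\in\Gamma$, so the conclusion applies. Crucially, $N_{\epsilon/2}$ depends only on $\epsilon$ (and system data), not on $x$ or on $v$; this is exactly what Lemma~\ref{lem:EnterEpsiInFiniteTime} provides.

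Next I would exploit the structure of (\ref{eq:IRG_ss}): under Assumption~\ref{ass:ABStab} the steady-state equation admits a solution map $r\mapsto x_{ss}(r)$ that is linear, hence Lipschitz on the compact set $\mathcal R$ with some constant $L$. Since $v_0,r\in\mathcal R_{\bar\kappa}\subseteq\mathcal R$ and $\mathcal R$ is bounded, $\|v_{dir}\|\leq D$ for some uniform constant $D$ equal to (at most) the diameter of $\mathcal R$. For $\kappa\in[0,\bar\kappa]$ this yields
\begin{equation*}
\|x_{ss}(v^+)-x_{ss}(v)\|\;\leq\;L\kappa\|v_{dir}\|\;\leq\;L\bar\kappa D.
\end{equation*}
I would then fix $\bar\kappa>0$ small enough that (i) $L\bar\kappa D\leq \epsilon/2$, and (ii) every $v^+ = v + \kappa v_{dir}$ with $v\in\mathcal R_{\bar\kappa}$ and $\kappa\in[0,\bar\kappa]$ still lies in $\mathcal R$ (so that $x_{ss}(v^+)$ and $\mathcal P(v^+)$ are defined); this uses the margin definition of $\mathcal R_{\bar\kappa}$ together with the uniform bound $\|v^+-v\|\leq\bar\kappa D$. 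Both conditions depend only on $\epsilon$, $L$, and $D$, so $\bar\kappa$ is independent of $x$.

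Finally, for $j\geq N_{\epsilon/2}$ the triangle inequality gives
\begin{equation*}
\|x^{ext}_j(x,v) - x_{ss}(v^+)\| \;\leq\; \|x^{ext}_j(x,v) - x_{ss}(v)\| + \|x_{ss}(v) - x_{ss}(v^+)\| \;\leq\; \tfrac{\epsilon}{2}+\tfrac{\epsilon}{2}=\epsilon,
\end{equation*}
so $x^{ext}_j(x,v)\in \mathcal B(x_{ss}(v^+),\epsilon)\subseteq \mathcal P(v^+)$ by Assumption~\ref{ass:ballEps}, which is the desired conclusion. The main subtlety, and the only real bookkeeping step, is the calibration of $\bar\kappa$ in step (ii) above: because $\bar\kappa$ plays a dual role (margin inside $\mathcal R$ and upper bound on $\kappa$) while $v_{dir}$ has norm at most $D$ that need not equal one, one must shrink $\bar\kappa$ carefully so that $\bar\kappa D$ still fits inside the admissible margin of $\mathcal R_{\bar\kappa}$; once this is done, the rest of the argument is immediate.
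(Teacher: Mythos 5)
Your proposal is correct and follows essentially the same route as the paper's proof: invoke Lemma~\ref{lem:EnterEpsiInFiniteTime} with $\delta=\epsilon/2$, bound $\|x_{ss}(v^+)-x_{ss}(v)\|$ by shrinking $\bar\kappa$ (the paper uses linearity directly, writing this difference as $\kappa x_{ss}(v_{dir})$ and setting $\bar\kappa = \epsilon/(2\|x_{ss}(v_{dir})\|)$, whereas you package the same fact as a Lipschitz constant times the diameter of $\mathcal R$), apply the triangle inequality, and conclude via Assumption~\ref{ass:ballEps}. Your added care in step (ii) about keeping $v^+$ inside $\mathcal R$ is a legitimate bookkeeping point that the paper delegates to the definition of $\mathcal R_{\bar\kappa}$ and the subsequent remark, but it does not change the argument.
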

\begin{proof}
Select $\epsilon$ from Assumption \ref{ass:ballEps}. By Lemma \ref{lem:EnterEpsiInFiniteTime} $\exists N_{\epsilon/2}$ s.t.
\begin{align}
    \forall j\geq N_{\epsilon/2},\; x^{ext}_{j}(x,v) \in \mathcal{B}(x_{ss}(v),\epsilon/2),\;\forall x\in\mathcal P(v).\label{eq:IRG_Lem2L1}
\end{align}
    Now, let $x^{ext}_{j}=x^{ext}_{j}(x,v)$. Then, for any $j\geq N_{\epsilon/2}$ consider  
    \begin{align*}
    ||x^{ext}_{j}-& x_{ss}(v^+)|| =  ||x^{ext}_{j}- x_{ss}(v) - \kappa x_{ss}(v_{dir})||\\
   \leq& ||x^{ext}_{j}- x_{ss}(v)|| + || \kappa  x_{ss}(v_{dir})|| ,\\
    \leq& \frac{\epsilon}{2} + \kappa||x_{ss}(v_{dir})||.
    \end{align*}
    Where the second line follows from the triangle inequality and the third from (\ref{eq:IRG_Lem2L1}). Defining: $\bar \kappa = \frac{\epsilon}{2||x_{ss}(v_{dir})||}$
    \begin{equation*}
    ||x^{ext}_{j}- x_{ss}(v^+)|| \leq  \epsilon,\;\forall \kappa\in[0,\bar\kappa].
    \end{equation*}
    Therefore, $x^{ext}_{j}(x,v)\in\mathcal P(v^+)$, by Assumption \ref{ass:ballEps}.
    \end{proof}
\begin{rmk}
Note that the maximum step size, $\bar\kappa||x_{ss}(v_{dir})||$ depends only on $\epsilon$. The set $\mathcal R_{\bar\kappa}$ can be made arbitrarily close to $\mathcal R$ by decreasing the value of $\bar\kappa$. This is achieved by decreasing the value of $\epsilon$. Validity of Assumption \ref{ass:ballEps} is still ensured. Also, compactness of $\mathcal R$ is inherited by $\mathcal R_{\bar\kappa}$ \cite[Theorem 2.1 (x)]{kolmanovsky1998theory}. Also,  in Lemma \ref{lem:FTConvR} the convex hull of $\{v_0,r\}$ lies inside $\mathcal R_{\bar\kappa}$. This is relevant, as in Algorithm \ref{algo:1}, $v^+$ is inside the convex hull of $\{v_0,r\}$.
\end{rmk}    
\begin{lem}
\label{lem:fRec} Consider system (\ref{eq:IRG_dyn}) with IC $x_0$ and  desired set-point $r \in\mathcal R$. If $\exists v^0\in\mathcal R$ s.t. $(x_0,v^0)\in\Gamma$, initializing $v_0 = v^0$ in Algorithm \ref{algo:1} and defining $v_{dir}$ according to (\ref{eq:IRG_ydir}) ensures recursive feasibility of Algorithm \ref{algo:1}: $x_k\in\mathcal X\Rightarrow z_{k+1}\in\mathcal Z,\;\forall k\geq 0$.
\end{lem}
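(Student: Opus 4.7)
The plan is to proceed by induction on $k$, carrying as inductive hypothesis not just $x_k\in\mathcal X$ but a stronger invariant: there exists $k'\le k$ (the last update index) such that $(x_{k'},v_{k'})\in\Gamma$, $v_k=v_{k'}$, and the closed-loop state/input pair matches the extended sequence anchored at $k'$,
\[
x_k = x^{ext}_{k-k'}(x_{k'},v_{k'}), \qquad u_k = u^{ext}_{k-k'}(x_{k'},v_{k'}).
\]
The base case follows from the initialization: set $k'=0$, take $v_0=v^0$, observe that $(x_0,v^0)\in\Gamma$ by hypothesis, and note the algorithm then applies $u_0=u^{ext}_0(x_0,v^0)$.

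For the inductive step I would split on the outcome of lines 4--14 in Algorithm \ref{algo:1} at time $k+1$. If the incremented reference $v^+$ is accepted, then by construction the freshly computed sequence satisfies (\ref{eq:IRG_refCond2}) from $x_{k+1}$, so $(x_{k+1},v^+)\in\Gamma$; resetting $k'\leftarrow k+1$ and $u_{k+1}=u^{ext}_0(x_{k+1},v^+)$ restores the invariant. If $v^+$ is rejected, then $v_{k+1}=v_{k'}$ and by one-step propagation of the dynamics using $u_k$ from the inductive hypothesis, $x_{k+1}=x^{ext}_{k+1-k'}(x_{k'},v_{k'})$. For the input I would distinguish the two branches of the rejection case: when $k+1-k'<N_{\tt MPC}$, the algorithm reads $u_{k+1}$ directly from the stored MPC tail; otherwise it evaluates the saturated LQR at $x_{k+1}$, which by the explicit formula in (\ref{eq:IRG_MPCSEC1}) equals $u^{ext}_{k+1-k'}(x_{k'},v_{k'})$ because $x_{k+1}$ coincides with the predicted $x^{ext}_{k+1-k'}$.

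Once the invariant is established, feasibility $z_{k+1}\in\mathcal Z$ is extracted from $(x_{k'},v_{k'})\in\Gamma$ by considering how far $k+1-k'$ sits relative to $N_{\tt RG}$. For $k+1-k'\le N_{\tt RG}-2$ the conclusion is a direct read of the first line of (\ref{eq:IRG_refCond2}). For $k+1-k'\ge N_{\tt RG}-1$, I would invoke $x^{ext}_{N_{\tt RG}-1}\in\mathcal I^{\tt LQR}(v_{k'})\subseteq\mathcal O^{\tt LQR}_\infty(v_{k'})$ and the forward invariance of $\mathcal I^{\tt LQR}$ under the unsaturated LQR, noting that on $\mathcal O^{\tt LQR}_\infty$ the projection $\Pi_{\mathcal U}$ is inactive so the saturated feedback agrees with the plain LQR. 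This yields $x_{k+1}\in\mathcal I^{\tt LQR}(v_{k'})\subseteq\mathcal X$ and $u_{k+1}\in\mathcal U$.

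The main obstacle I expect is the bookkeeping needed to identify the input actually applied by Algorithm \ref{algo:1} with the $(k-k')$-th element of the extended sequence (\ref{eq:IRG_MPCSEC1}) in every branch; in particular, arguing that the saturated-LQR branch executed after $N_{\tt MPC}$ no-update steps is consistent with the sequence the algorithm never explicitly stored. This hinges on the determinism of the plant, which forces $x_k=x^{ext}_{k-k'}(x_{k'},v_{k'})$ and makes the two definitions coincide; everything else is then a routine unpacking of (\ref{eq:IRG_refCond2}) and the forward invariance property of $\mathcal I^{\tt LQR}$.
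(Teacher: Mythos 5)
Your proposal is correct and follows essentially the same route as the paper: the closed-loop trajectory between reference updates coincides with segments of the extended sequence $\{z^{ext}_j(x_{k'},v_{k'})\}$ anchored at the last acceptance time, and each such sequence is constraint admissible because acceptance requires $(x_{k'},v_{k'})\in\Gamma$ (via (\ref{eq:IRG_refCond2}) together with the constraint-admissible forward invariance of $\mathcal I^{\tt LQR}$). The paper states this observation tersely; your explicit induction with the invariant $x_k=x^{ext}_{k-k'}(x_{k'},v_{k'})$, $u_k=u^{ext}_{k-k'}(x_{k'},v_{k'})$ is just a more careful write-up of the same argument.
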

\begin{proof}
The claim follows directly from the assumptions and implementation of Algorithm \ref{algo:1}. Using Algorithm \ref{algo:1}, the trajectory of the system between two subsequent reference increments at times $k_1,\;k_2\in\mathbb N$, $k_1<k_2$ is given by $\{z^{ext}_j(x_{k_1},v_{k_1})\}_{j=0}^{k_2-k_1}$. For a reference increment to be performed, at time $k_2$, $\{z^{ext}_j(x_{k_2},v_{k_2})\}^\infty_{j=0}\subseteq\mathcal Z$ is required. Finally, $(x_0,v_0)\in\Gamma$ implies that $\{z^{ext}_j(x_0,v_0)\}^\infty_{j=0}\subseteq\mathcal Z$. Therefore $z_k\in\mathcal Z,\;\forall k\geq0$. As a result recursive feasibility is ensured.
\end{proof}

\begin{thm}
\label{thm:mainTheo}
Consider the problem of bringing system (\ref{eq:IRG_dyn}), controlled using Algorithm \ref{algo:1}-\ref{algo:2}, to the final set-point $r\in\mathcal R_{\bar\kappa}$ which is constant in time, from the initial state $x_0$ subject to constraints (\ref{eq:IRG_cstrSet}). Assume that Assumptions \ref{ass:ABStab}-\ref{ass:ballEps} hold,  and that $\exists v^0$ s.t. $(x_0,v^0)\in\Gamma$. If $v_0 = v^0$ and $v_{dir}$ is defined according to (\ref{eq:IRG_ydir}), then finite-time convergence of $v_k$ to $r$ and asymptotic convergence of the state, $x_k$, to $x_{ss}(r)$ is ensured.
\end{thm}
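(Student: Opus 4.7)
The plan is to address the two convergence claims---finite-time convergence of $v_k$ to $r$ and asymptotic convergence of $x_k$ to $x_{ss}(r)$---separately, with closed-loop constraint satisfaction inherited directly from Lemma~\ref{lem:fRec}. The hypotheses $(x_0,v^0)\in\Gamma$ and $v_0 = v^0$ match those of Lemma~\ref{lem:fRec}, so the state trajectory is constraint-admissible at every step and, whenever Algorithm~\ref{algo:1} accepts an update at some time $k'$, we have $(x_{k'},v_{k'})\in\Gamma$.

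For the finite-time convergence of $v_k$ to $r$, I would combine Lemma~\ref{lem:FTConvR} with the $\kappa_k$ schedule of Algorithm~\ref{algo:2}. Let $k_1$ be any time at which a reference update was just accepted, so $k' = k_1$. Since Algorithm~\ref{algo:2} sets $\kappa_k = \kappa^0$ while $k - k' \leq N_a$ and $\kappa_k = \kappa^0/(k-k'-N_a)$ afterward, there is a reference-independent integer $N_\kappa$ with $\kappa_k \leq \bar\kappa$ whenever $k - k' \geq N_\kappa$, where $\bar\kappa$ is the constant of Lemma~\ref{lem:FTConvR}. Setting $T^\star = \max\{N_{\epsilon/2}, N_\kappa\}$ and noting that, while no update is accepted in $(k_1, k_1 + T^\star)$, the state obeys $x_{k_1 + T^\star} = x^{ext}_{T^\star}(x_{k_1}, v_{k_1})$, Lemma~\ref{lem:FTConvR} forces acceptance of the update attempted at $k_1 + T^\star$. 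Hence consecutive acceptances are at most $T^\star$ steps apart and the increment used at each acceptance is uniformly bounded below by $\kappa_{\min} := \kappa^0/\max\{1, T^\star - N_a\} > 0$. Because the cap in Algorithm~\ref{algo:2} is triggered precisely when $\sum_j \kappa_j$ would exceed one, after at most $\lceil 1/\kappa_{\min}\rceil$ acceptances---hence in at most $T^\star \lceil 1/\kappa_{\min}\rceil$ time steps---the cap triggers and $v_k = r$.

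For the asymptotic convergence of $x_k$, let $T$ denote the finite instant at which $v_T = r$, so that $(x_T, r)\in\Gamma$ by recursive feasibility. For every $k > T$, the cap of Algorithm~\ref{algo:2} makes $\kappa_k = 0$ and thus $v^+ = v_{k-1} = r$ in Algorithm~\ref{algo:1}. In either branch of that algorithm, the input applied at time $k$ coincides with $u^{ext}_{k-k'}(x_{k'}, r)$ for the latest accepted update time $k' \geq T$; since no update is accepted strictly between $k'$ and $k$, one also has $x_k = x^{ext}_{k-k'}(x_{k'}, r)$. If only finitely many updates occur after $T$, the trajectory eventually follows a single extended sequence that, by $(x_{k'}, r) \in \Gamma$, enters $\mathcal{I}^{LQR}(r)$ within $N_{RG} - 1$ steps of the last acceptance; if infinitely many updates occur, the applied control coincides with the MPC input $u^*_0(x_k, r, N_{MPC})$ at each step and standard input-constrained MPC stability applies, with $(x_k, r)\in\Gamma$ at each acceptance ensuring the state remains in the MPC region of attraction. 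In either scenario, once the state lies in $\mathcal{I}^{LQR}(r)$---chosen per Remark~2 as a constraint-admissible subset of the LQR maximal output admissible set---the saturation operator in (\ref{eq:IRG_MPCSEC1}) is inactive and the closed loop reduces to $x_{k+1} - x_{ss}(r) = (A - BK)(x_k - x_{ss}(r))$, which is asymptotically stable. Therefore $x_k \to x_{ss}(r)$.

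The main obstacle lies in the $v_k \to r$ argument: the $\kappa_k$-schedule must simultaneously (i) decrease fast enough that Lemma~\ref{lem:FTConvR} eventually applies and an acceptance is forced, and (ii) remain bounded below so that the cumulative $\sum_j \kappa_j$ reaches one in finitely many acceptances. The reset of $\kappa_k$ back to $\kappa^0$ after every accepted update is precisely what supplies the uniform lower bound $\kappa_{\min}$ used in the counting argument; a schedule that monotonically decreased with $k$ without reset would not suffice.
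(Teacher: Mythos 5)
Your argument for the central claim, finite-time convergence of $v_k$ to $r$, is essentially the paper's: you combine Lemma~\ref{lem:FTConvR} with the decay schedule of Algorithm~\ref{algo:2} to show that an acceptance is forced within $\max\{N_{\epsilon/2},N_{\bar\kappa}\}$ steps of the previous one, that each accepted increment is bounded below by a uniform $\kappa_{\min}>0$ (your bound $\kappa^0/\max\{1,T^\star-N_a\}$ is in fact slightly sharper than the paper's $\kappa^0/N$), and that the cap in lines 5--6 of Algorithm~\ref{algo:2} then forces $\sum_j\kappa_j$ to reach exactly $1$ after finitely many acceptances. The paper phrases this as a contradiction while you count directly; that is a cosmetic difference, and your version is arguably cleaner.

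Where you genuinely diverge is the state-convergence step. The paper disposes of it in one line by citing Lemma~\ref{lem:EnterEpsiInFiniteTime}; you instead split into cases and, when acceptances recur infinitely often after $v_k=r$, invoke ``standard input-constrained MPC stability'' with $\Gamma$ (equivalently $\mathcal P(r)$) playing the role of a region of attraction. Two things are off there: (i) Section~\ref{sec:MPCIntro} only asserts that the uMPC law is \emph{locally} stabilizing, and nothing in the paper establishes that $\mathcal P(r)$ is contained in its region of attraction, so that appeal is unsupported; and (ii) your statement that the applied input equals $u^*_0(x_k,r,N_{\tt MPC})$ ``at each step'' in that case is inaccurate --- between acceptances the stored extended sequence or the saturated LQR is applied, so the post-convergence closed loop is a concatenation of extended-trajectory segments rather than the pure MPC loop. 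To be fair, the paper's own one-line citation of Lemma~\ref{lem:EnterEpsiInFiniteTime} is comparably terse (that lemma concerns the open-loop extended trajectory, which is re-initialized at every acceptance), so this is a shared soft spot rather than a defect you introduced; but if you want this part airtight you should argue convergence across acceptance times (e.g., a decrease property of a common Lyapunov-like bound over successive restart points in $\mathcal P(r)$) rather than citing generic MPC stability.
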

\begin{proof} Suppose $v_0\neq r$. We then need to show there exists $ k^*\in\mathbb N$ s.t. $\forall j\geq k^*,\;v_{j} = r$. Define
\begin{equation*}
     \Delta v_k =v_k  - v_0 = s_k v_{dir},\quad s_k=\sum_{j=0}^k\kappa_j,
\end{equation*}
Since $\kappa_k\geq0\;\forall k$, showing finite-time convergence to $r$ is equivalent to showing  that $\exists k^*\in\mathbb N$ s.t. $\forall j\geq k^*$, $ s_j = 1$. We do this by contradiction.\\
Hypothesis (\textbf{H}): $\nexists k^*\in\mathbb N$ s.t. $\forall j\geq k^*,\;s_j = 1$.
First, define $N_{\epsilon/2}$ as in Lemma \ref{lem:FTConvR}, $N_{\bar\kappa} = \min\{ i\in\mathbb N\;|\;\frac{\kappa^0}{i-N_a}\leq \bar\kappa\}$ and define $N = \max\{N_{\epsilon/2},\;N_{\bar\kappa}\}$. At any time instant, $k_1\in\mathbb N$ consider the last instant such that there was a change in the reference: $k'= \max \{i\leq k_1\;|\; v_{i-1}\neq v_{i}\}$. Now, assume that $v_{k'+N-1}=v_{k'}$, then, following Algorithm \ref{algo:1}-\ref{algo:2}, $x_{k'+N} = x^{ext}_{N}(x_{k'},v_{k'})\in\mathcal P(v^+_{k'+N})$, where $v^+_{k'+N}$ is the tested reference at time $k'+N$. This is because, from Lemma \ref{lem:FTConvR}, $\forall j\geq N_{\epsilon/2}$, $x^{ext}_{j}(x_{k'},v_{k'})\in\mathcal P(v_{k'}+\bar\kappa v_{dir})$,  and because at time $N$ the tested increment is smaller than $\bar\kappa$. This implies $v_{k'+N}\neq v_{k'}$ and $\kappa_{k'+N}\geq\frac{\kappa_0}{N}$. Now, if an advance of the reference takes place at any $j\in\mathbb N\cap[k',k'+N]$, then $\kappa_k\geq\frac{\kappa_0}{j-k'-N_a}\geq\frac{\kappa^0}{N}$ by the implementation of Algorithm \ref{algo:2}. Hence, the reference is incremented at least every $N$ steps. Thus, there exists an infinite sequence of time instants $\{k_i\}_{i=0}^\infty$ s.t. $\kappa_{k_i} \geq\frac{\kappa^0}{N}$. In turn, the sequence of $\{s_{k_i}\}^{\infty}_{i=0}$ diverges. Now, choose $k^*=\min k$ s.t. $s_k \geq 1$. At that time instant, Line 6 of Algorithm \ref{algo:2} is executed, and given that $\kappa_r\leq\bar\kappa$ an increment of $\kappa_r$ is performed, ensuring $s_{k^*} = 1$. Lines 5-6 of Algorithm \ref{algo:2} ensure that $\forall j\geq k^*$ $s_j=1$, violating \textbf{H}. As such we have $k^*>0$ s.t. $\forall j\geq k^*s_j=1$. Thus, convergence of $v_k$ to $r$ in finite-time is proved.\\
Convergence of the state to $x_{ss}(r)$ is directly implied using Lemma \ref{lem:EnterEpsiInFiniteTime}.
\end{proof}
\begin{rmk}
\label{rem:kappaSelec}By examining the proof of Theorem \ref{thm:mainTheo}, it can be shown that the results hold for other $\kappa_k$ selection strategies as long as such strategies  ensure that whenever $k-k^\prime$ becomes large, $\kappa_k \leq \bar{\kappa}$.
\end{rmk}

\section{Illustrative example}
\label{sec:EXAMPLE}
We consider a problem of spacecraft rendezvous to a target on a circular orbit. The relative motion dynamics are represented by the CWH equations \cite{clohessy1960terminal} given by
\begin{subequations}
\label{eq:ex_genRelMo}
\begin{align}
    \dot x &= A_c x + B_c u,\\
    y &= C_c x,
\end{align}
\end{subequations}
with $x\in \mathbb R^6$, $u\in \mathbb R^3$, describing the relative motion of the spacecraft in the Hill's frame centered at the target. The first three and last three states represent radial, along track and cross track positions and velocities of the spacecraft, respectively. The inputs are relative accelerations (normalized thrust: $[{\tt N}\;{\tt kg}^{-1}]$) along the three axes. In (\ref{eq:ex_genRelMo}),
\begin{align*}
     A_c &= \begin{bmatrix}
    \; & 0_{3\times3} &\;&\;&  I_{3\times 3}\\
    \\
    3n^2 & 0 &0 &0 & 2n &0\\
     0 &0 &0 & -2n &0 &0\\
     0 &0 &-n^2 & 0 &0 &0 \\
\end{bmatrix},\\
B_c &= \begin{bmatrix}  0_{3\times3}\\ I_{3 \times 3}\end{bmatrix} ,\quad C_c = \begin{bmatrix} I_{3\times 3}& 0_{3\times3} \end{bmatrix},
\end{align*}
where $n =\sqrt{\mu/r_0^3} $, $\mu$ is the gravitational parameter and $r_0$ is the orbital radius of the nominal orbit.

The system has the following state and control constraints:
\begin{itemize}
    \item Input saturation: $||u||_\infty \leq 0.1$. 
    \item Maximum speed: $|x_i|\leq 3,\;i= 4,\;5,\;6$.
    \item The spacecraft must remain in front of the target in the in-track direction, $x_2 \geq 0 $.
    \item Line of sight cone (nonlinear, convex constraint): The spacecraft should remain in the 15 deg cone defined by $x_1^2 + x_3^2 - \tan^2 15^o (x_2+1)^2\leq 0$.
    \item Final speed (\textit{if-then} constraint): When approaching the target, the norm of the relative velocity should be small enough to avoid damage: If $x_2\leq 2$ then $x_4^2+x_5^2+x_6^2 \leq 0.1^2$
\end{itemize}

The spacecraft relative motion dynamics have forced equilibria of the form:
\begin{align*}
    r &= [a,\;b,\;c]^\top, \quad a,\;b,\;c\in \mathbb R\\
    \hat u_{ss}(r) &= [-3n^2a,\;0,\;n^2c] ^\top,
\end{align*}
where, the elements of $r$ correspond to the output states in $y$ and the relative speed is zero at all forced equilibria.
Simulations are performed considering a nominal orbit at 500 km altitude above the earth. When discretizing the linearized system, a sampling period $T_s =0.5$ sec is used. 
When we apply Algorithm \ref{algo:1}, the MPC has a prediction horizon of $N_{\tt MPC} = 20$ steps and the constraint satisfaction is assessed over a horizon of $N_{\tt RG} = 120$ steps. To choose the value of $N_{\tt RG}$, a set of 200 randomly generated ICs with initial velocity norm lower or equal to $1.5\;[\tt m \;\tt s ^{-1}]$, $x_2 \in \;[50,150]\;[\tt m]$ and $x_1,\;x_3$ inside the cone were generated. The value of $N_{\tt RG}$ was chosen such that for all the ICs that had a constraint admissible initial reference, no constraint violation occurred during subsequent 150 sec of simulation.

The MPC weight matrices were chosen as $R =I_3$, $Q = \text{diag}([100,\;1,\;100,\;10,\;1,\;10])$.
By relying on flexibility in choosing $\kappa_k$ described in Remark 4, we utilized the following scheme for computing $v^+$ which is better adopted to the problem at hand: 
\begin{subequations}
\begin{align}
&v^+ = v_{k-1} + \kappa \Delta v, \label{eq:CWH_IRGRefPlus}\\
& \Delta v   = \begin{cases}
    \text{sign}(r-v_0)\cdot\Delta v_{fix}    & v_{-,2}\geq 20  \;[\tt m], \\
    r -v_{-}  & v_{-,2} < 20  \;[\tt m], \label{eq:CWH_IRGRefPlus}
  \end{cases}\\
    &\Delta v_{fix} = r - [3.67,\;20,\;3.67]^\top,\;\kappa = 0.1,\label{eq:CWH_IRGRefPlusFinal}
\end{align}
\end{subequations}
where $r = 0_{3\times 1} $ is the final set-point, and $v_{k-1}$ denotes the previous set-point. The reference is incremented by a fixed amount when far from the target ($>20\tt \;[m]$) and proportionally to the difference between $v_{k-1}$ and $r$ when close to the target. Initialization of the reference is done by setting $v_0 = Cx_0$ where $x_0$ is the IC.

Figure \ref{fig:simple_IRGMPC_traj} shows the time histories of states, reference commands and inputs for the spacecraft starting at $x(0) = [10,\;100,\;20,\;0,\;0,\;0]^\top$ and controlled by the proposed RGMPC scheme with the Fast MPC solver \cite{kogel2011fast}. The simulation shows convergence to the target spacecraft in 100 sec while respecting constraints on both states and inputs. Figure \ref{fig:simple_IRGMPC_trajJ} (bottom, left and center) shows two dimensional projections of the trajectory as well as of the line of sight cone constraints which are respected at all times. Finally, Figure \ref{fig:simple_IRGMPC_trajJ} (bottom right) depicts the velocity norm for times from around 75 sec and onward as well as the terminal velocity constraint when it is active. The velocity norm rides the constraint boundary before going to 0 as the spacecraft converges to the final set-point.

Figure \ref{fig:simple_IRGMPC_trajJ} (top) shows the instants at which the reference is changed during the maneuver. After 95 sec the final set-point is reached. In most cases when the reference is held constant, it remains only for 1 or 2 time instants. Only in 2 occasions does this occur for a significantly longer period: for 8 time steps (approx. 10 sec after the start) and for 15 time steps (approx. 60 sec after the start). Hence, with the proposed reference switching logic, the saturated LQR is not used.

\begin{figure}[ht]
\centering
\includegraphics[scale=0.6]{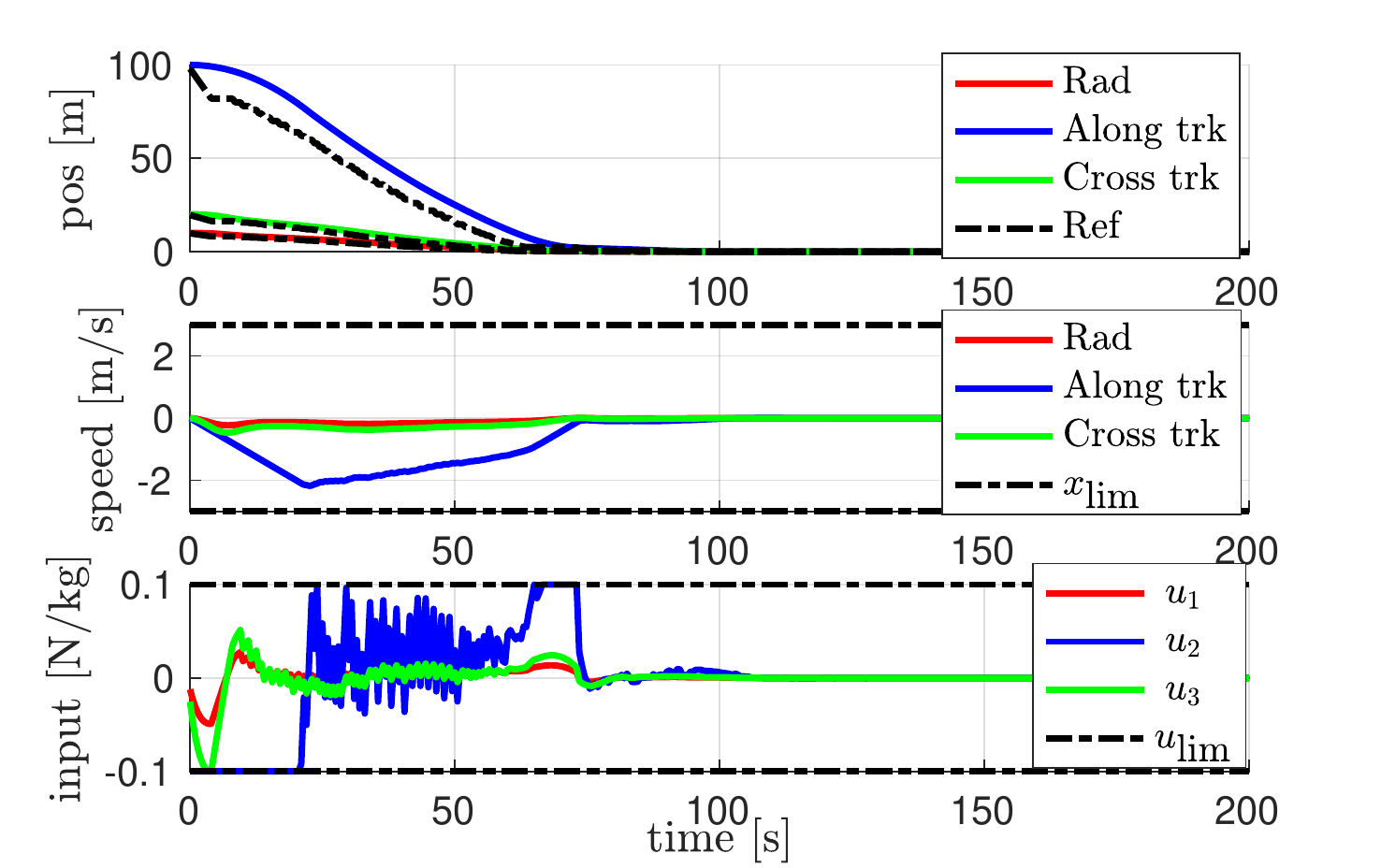}
\caption{Time histories of the state, input and reference signals for the spacecraft. Max-Min constraints on velocities and inputs  are also shown on the lower two figures (dotted lines).}
\label{fig:simple_IRGMPC_traj}
\end{figure}

\begin{figure}[ht]
\centering
\includegraphics[scale=.4]{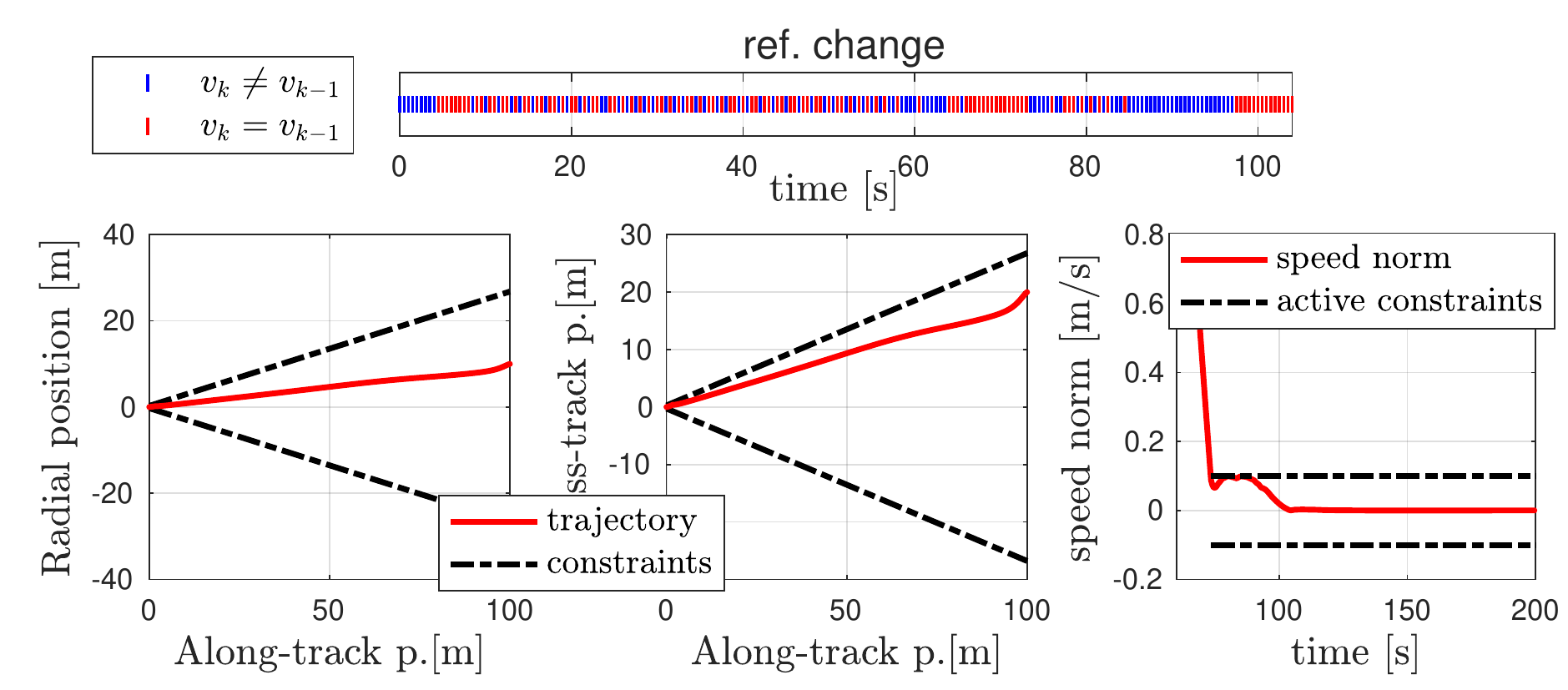}
\caption{Top: Reference changes over the first $100$ sec. Each time instant is represented by a bar the color of which indicates if there has been a reference change. 
Bottom: 2D projections of spacecraft trajectory and speed norm with related constraints.
}
\label{fig:simple_IRGMPC_trajJ}
\end{figure}

\subsection{Comparison to the Fast-MPC without add on scheme}
\label{sec:res_compbasic}
To confirm the necessity of a state constraint handling mechanism we perform simulations over a
grid of IC with either an uMPC or the proposed RGMPC. We consider, at a distance $x_2 = 50 \;[\text m]$, 200 points forming concentric circles in the $x_1$-$x_3$ plane. The radii go up to $r^2 = {\tan^2(14.5^o)(50^2+1)}$. This set of values combined with zero initial speed is used as the set of ICs.
Simulations resulted in the RGMPC not violating constraints a single time while the uMPC violated constraints for each IC. In particular, for each IC the spacecraft passed behind the target spacecraft and the terminal speed constraint  was violated. Figure \ref{fig:MPCComp_cone} depicts what ICs lead to violation of the cone constraint by the uMPC controller. As expected this is often when starting away from the center line of the cone.

\begin{figure}[ht]
\centering
\includegraphics[scale=0.4]{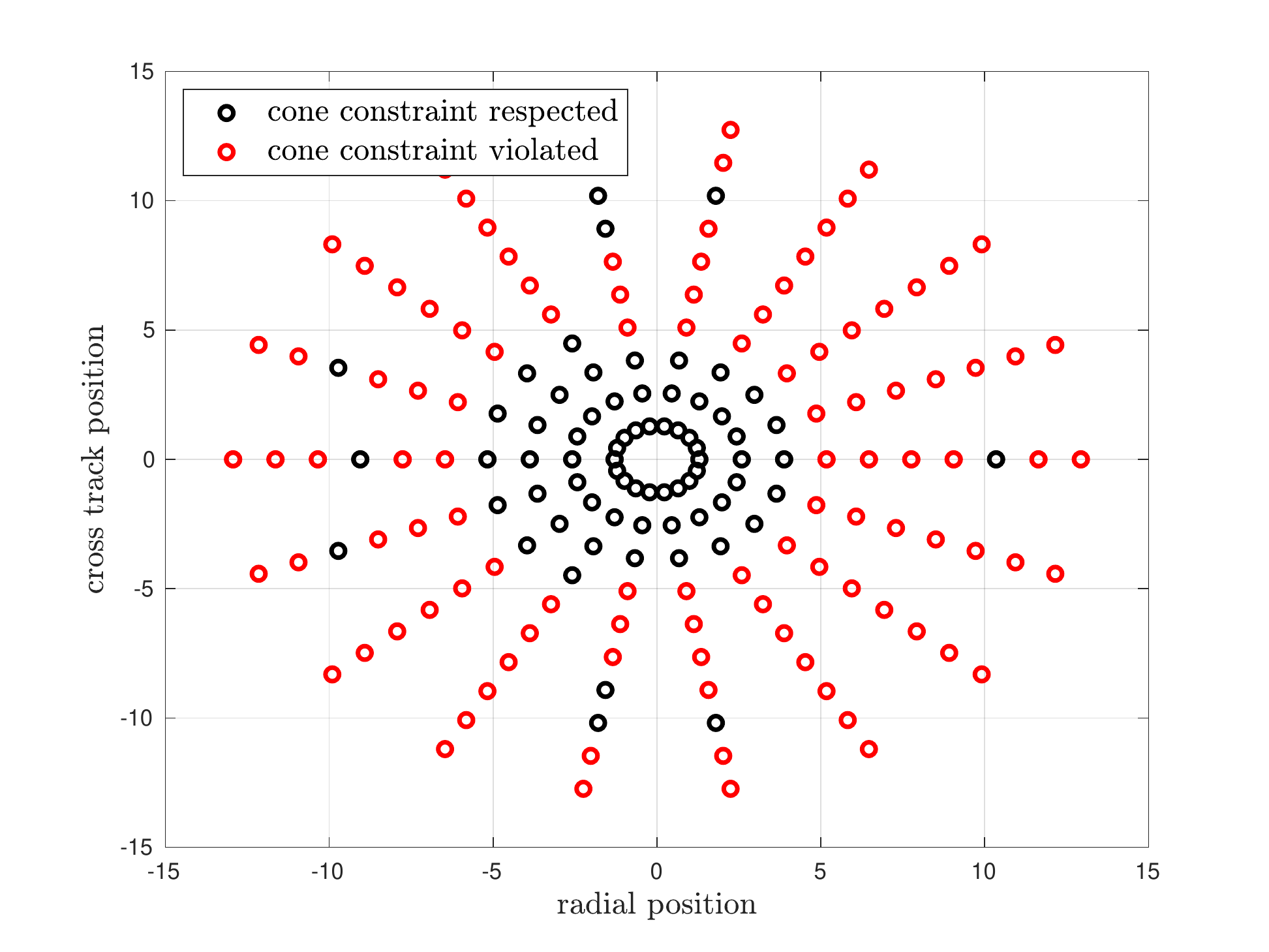}
\caption{ Values of the IC that lead to subsequent violation of the cone constraint when considering uMPC. All ICs have $x_2 = 50\;[\text{m}]$ and zero speed. For all points shown, the RGMPC satisfies the constraints at all times.}
\label{fig:MPCComp_cone}
\end{figure}

\subsection{Comparison to a saturated LQR-IRG scheme}
To assess the advantages of the presented scheme with respect to more conventional schemes we will compare the performance of the RGMPC scheme with that of a saturated LQR extended with an IRG, referred to as sLQR-RG. To do so, we consider the following metrics: 
\begin{itemize}
    \item Successful initialization of the RG and no constraint violation, denoted as ``succ. sim'' type Boolean.
    \item Time required to reach the target spacecraft within a specified tolerance, denoted as $t_{conv}\;[\text s]$.
    \item An input cost that relates to fuel consumption \cite[Section~14.3]{gurfil2016celestial}, computed as $u_{cost} = \int_0^\infty ||u(t)||_2^2\d t\;[\tt N^2\;\tt{kg}^{-2}\;\tt{s}]$.
\end{itemize}
Taking the same uniform grid of ICs as in the previous section and for the same $Q$ and $R$ matrices we performed simulations with both the RGMPC and sLQR-RG controllers. Table \ref{tab:LQRComp_at50m} summarizes results for the different metrics for the two controllers. The differences in input cost and time of convergence are also shown. Succesful simulations are achieved for every IC and both controllers. It is notable that the RGMPC outperforms the sLQR-RG in every single simulation both in maneuver time and in fuel cost. In particular, the RGMPC provides a mean reduction of 21\% in maneuver time and of 70\% in fuel consumption, both substantial values.

\begin{table}[ht]
  \centering
  \begin{tabular}{l| *{3}{c}}
     & \# succ. sim. & mean $u_{cost}$&mean $t_{conv}$\\ \hline
     RGMPC& 200 & 0.9& 75.91 \\
     sLQR-RG& 200 & 2.98& 95.74\\ 
  \end{tabular}
  \caption{Number of successful simulations and mean values of $u_{cost}$ and $t_{cost}$ for the RGMPC and sLQR-RG schemes.}
  \label{tab:LQRComp_at50m}
\end{table}

To explain the difference in performance between the sLQR-RG and RGMPC we consider the state and input trajectories. Figure \ref{fig:MPCLQR_at50} shows the radial component of the state (top) and input (bottom) trajectories for a single IC for the sLQR-RG (red) and RGMPC (blue). One can observe that unlike the RGMPC, the sLQR-RG generated input is prone to oscillations between the saturation values. This oscillation is directly translated into the position evolution as depicted in Figure \ref{fig:MPCLQR_at50}.

\begin{figure}[h]
    \centering
    \includegraphics[scale=0.6]{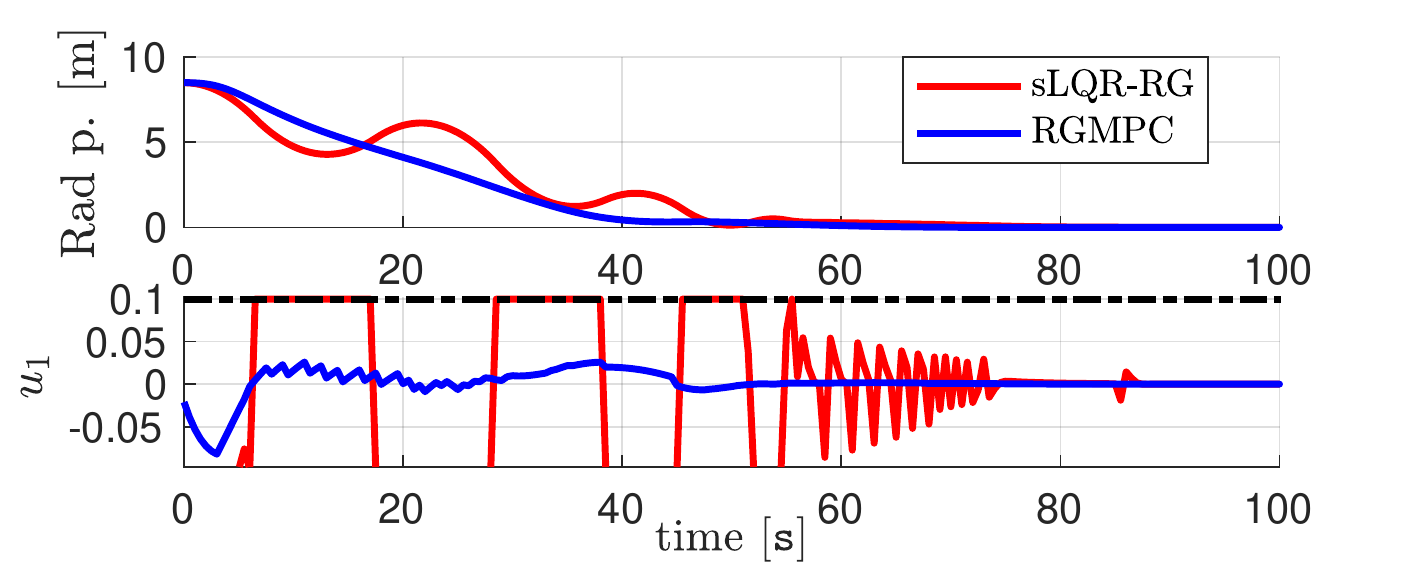}
  \caption{Radial component of position (top) and input thrust (bottom) evolution when considering the RGMPC (blue) and sLQR-RG (red)}
  \label{fig:MPCLQR_at50}
\end{figure}

\subsection{Comparison to state and input constrained MPC}
\label{sec:res_comp_cMPC}
We next assess the viability of the proposed RGMPC scheme in comparison to a state and input constrained MPC, referred to as cMPC. In this section, all OCPs, both for the RGMPC and the cMPC, are solved with a dual active set solver \cite{goldfarb1983numerically}. Once again, we compare the set of trajectories obtained starting at the 200 initial conditions described in Section \ref{sec:res_compbasic}. Additionally, we also use a second, similar set of trajectories starting from $x_2= 100\;[\text m]$. To estimate computational power requirement we collect, at each time step, the time required to compute the control input: $t_{comp}$. For one simulation, the average time required to compute the input commands is referred to as $t_{comp,av}$. Simulations were performed using Matlab on a machine with a 2.3 GHz 8-Core Intel Core i9. 

The RGMPC has horizons $N_{\tt MPC} =20$, and $N_{\tt RG}= 120$, three cMPC formulations were used for comparison. The cMPC differed in their horizon lengths: $N_1 = 20$, $N_2=60$ and $N_3=120$. Additionally, constraints were made polyhedral by the following modifications:
\begin{itemize}
    \item A polyhedral approximation of the line of sight cone: using 15 linear inequalities.
    \item The \textit{if-then} terminal constraint on speed is avoided by setting the  terminal reference to $r = [0\;4\;0]^\top$ and imposing $x_2\geq 3$. 
\end{itemize}
The rest of the constraints as well as the rest of the simulation parameters were kept identical to those of previous sections.

For all ICs an infeasible OCP for the $N_1$ cMPC was encountered. In all cases, this was due to overstepping the lower saturation bound on $x_2$. The spacecraft reached high velocities and was not able to decelerate in time to avoid constraint violation due to the short horizon of the cMPC . All other controllers successfully performed the docking maneuver for all ICs. It should be noted that, apart for $t_{comp}$, results for $N_2$ and $N_3$ were almost identical, with only slight differences in $u_{cost}$. The difference in time required to reach the final reference for the  the three controllers: RGMPC, cMPC $N_2,\;N_3$ was never longer than 1 sec.

Figure \ref{fig:cMPC_tcomp_av} shows statistics of $ u_{cost}$ and $t_{comp,av}$ for RGMPC and for cMPC with horizons $N_2$, $N_3$. With respect to the input cost, all controllers perform similarly, a slightly lower cost for the RGMPC is observed: for the $x_2 = 50\;[\text m]$ set medians are 0.7351 and 0.781 for the RGMPC and the two cMPC, respectively. In contrast, $t_{comp,av}$ (lower figure) differs substantially between each controller. The RGMPC has $t_{comp,av}$ that are one order of magnitude smaller than the cMPC with $N_2$ and almost two orders smaller than the cMPC with $N_3$. For this example, by looking at the $t_{comp}$ of the cMPC with $N_1$ (not presented here), it was assessed that the difference in $t_{comp,av}$ came from the difference in the MPC prediction horizon lengths and not so much from the additional constraints in the OCPs of the cMPC.
\begin{figure}[ht]
\centering
\includegraphics[scale=0.5]{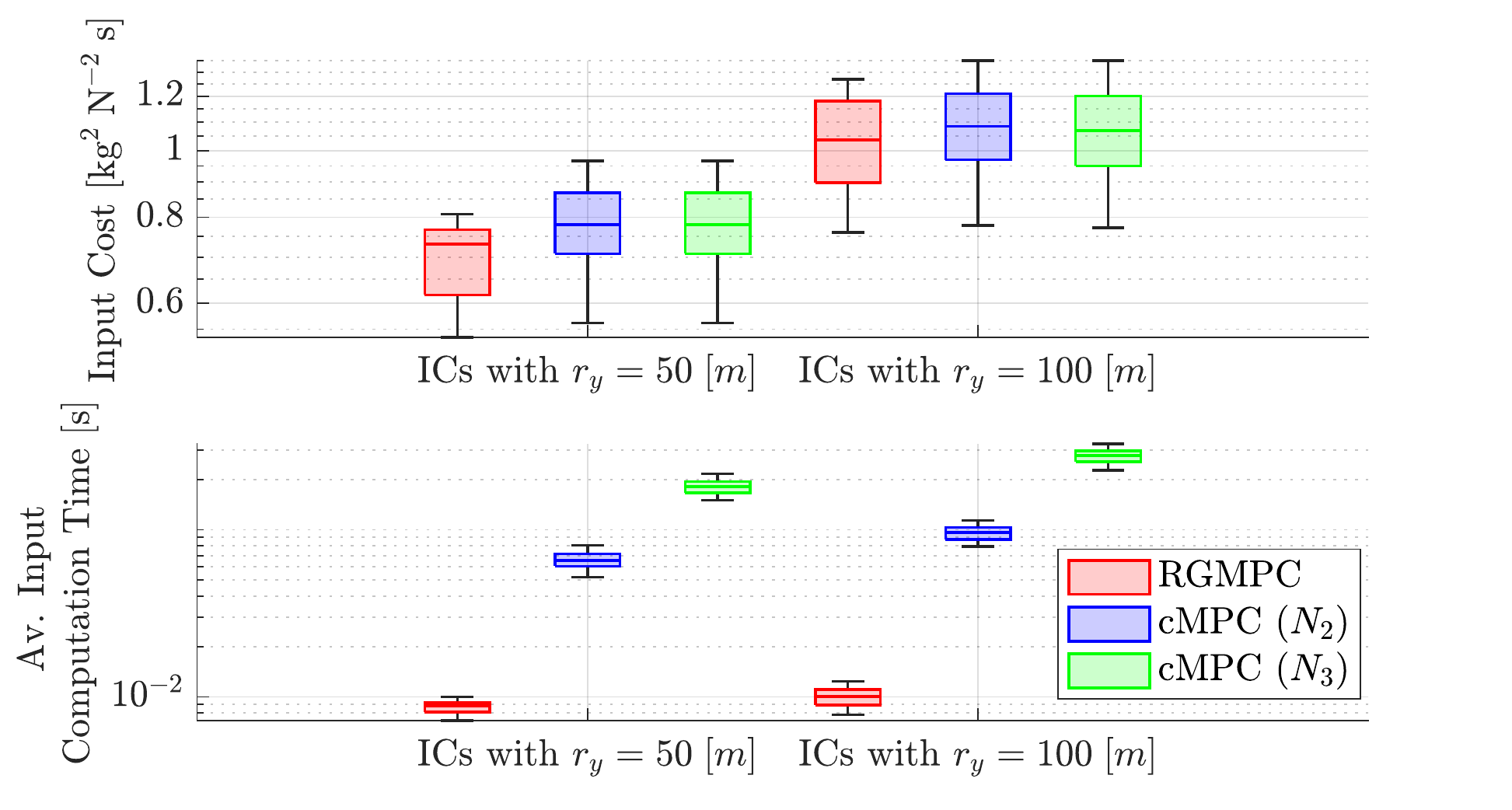}
\caption{Statistical values of $u_{cost}$ (upper) and $t_{comp,av}$ (lower) for the three controllers RGMPC, cMPC ($N_2$) and cMPC ($N_3$) and the two sets of IC: starting at 50 and 100 meters in track, respectively. The statistical values are : median, first and third quartile, min-max values.}
\label{fig:cMPC_tcomp_av}
\end{figure}

\section{Conclusion}\label{sec:Conclusion}
An input constrained Linear Quadratic MPC can be augmented by a variant of an incremental reference governor (IRG) to avoid violations of (possibly nonlinear) state constraints and nonlinear control constraints.  The proposed scheme is designed to avoid MPC optimization at every time instant over the IRG prediction horizon by relying on the previously computed MPC input sequence padded with the saturated LQR. Finite-time convergence properties of the modified IRG reference command to a strictly steady-state constraint admissible reference command have been established.  Simulation results demonstrate computational advantages of the proposed scheme over both input and state constrained MPC and performance advantages over a saturated LQR controller augmented with the IRG. 

\section{Acknowledgments}
We thank Dominic Liao-McPherson for providing the implementation of the dual active set solver used in Section \ref{sec:res_comp_cMPC}



\end{document}